   \newcommand{\one}{\mathbf 1}
   \newcommand{\Rea}{\mathbb R}
        \theoremstyle{plain}
        \theoremstyle{plain}
        \theoremstyle{plain}\newtheorem {lemma}{Lemma}
        \theoremstyle{plain}\newtheorem {proposition}{Proposition}
        \theoremstyle{plain}\newtheorem {theorem}{Theorem}
        \theoremstyle{definition}\newtheorem {assumption}{Assumption}
        \theoremstyle{definition}\newtheorem{definition}{Definition}
        \theoremstyle{remark} 
        \theoremstyle{remark} 
        \theoremstyle{remark}
\begin{document}

  	\bibliographystyle{plainnat}
      
   	\bibpunct{(}{)}{;}{a}{,}{,}
        

        \title{Exact inference from finite market data} 

        \author{F. K\"ubler \thanks{University of Z\"urich \href{mailto:fkubler@gmail.com} {\tt fkubler@gmail.com}} \and 
                     R. Malhotra \thanks{University of Warwick \href{mailto:r.malhotra@warwick.ac.uk} {\tt r.malhotra@warwick.ac.uk}}  \and 
                     H. Polemarchakis \thanks{University of Warwick \href{mailto:h.polemarchakis@warwick.ac.uk} {\tt h.polemarchakis@warwick.ac.uk}}}

        \date{July 14, 2021}

        \maketitle

        \begin{abstract}
        
        We develop conditions under which individual choices and Walrasian equilibrium prices and allocations can be exactly inferred from finite market data. First, we consider market data that consist of individual demands  as prices and incomes change. Second, we show that finitely many observations of individual endowments and associated  Walrasian equilibrium prices, and only prices, suffice to identify individual demands and, as a consequence, equilibrium comparative statics.

        	\bigskip 
        
       	\bigskip

        {\bf Key words}: identification, finite data, preferences, demand, Walrasian equilibrium.

        \bigskip

        {\bf JEL  classification}: D80; G10.

        	\end{abstract}

 	From only a finite number of observations of market data,  can we infer how the individual shall decide when faced with choices not previously encountered? Can we infer equilibrium prices and allocations as the distribution of endowments varies? Do market data need to consist of individual demands or do observations of aggregate demand or of the equilibrium price correspondence suffice for identification?
	
	Variations of this question have been extensively studied, but most of the existing literature either focuses on the case of infinitely many observations or poses only the question whether observations are consistent with utility maximization. The seemingly very important question of what one can conclude about an individual's preferences from finitely many observations has largely been overlooked. In this paper, we  pose exactly this question: what does one need to know about preferences a priori in order to be able to make non-trivial exact inference about the underlying data generating preference with only finitely many data points? In particular can one predict how an individual will choose from a choice-set not preciously encountered? Can one predict equilibrium prices at a profile of individual endowments not previously encountered?
		
	Revealed preference analysis, the weak axiom, was introduced by \citet*{samuelson1938} as a necessary condition for demand data, a collection of pairs of prices and bundles of commodities, to be generated by the maximization of a preference relation subject to the budget constraint. \citet*{houthakker1950} introduced the strong axiom, sufficient for demand data to be generated by preference optimization. Later, \citet*{afriat67} established the generalized axiom of revealed preference as necessary and sufficient for a finite set of demand data to be derived from the maximization of a preference relation or ordinal utility function. \citet*{reny15} extended the argument to arbitrary data sets. \cite{varian82} did use the results in \cite{afriat67} to discuss the possibility of making statements about preferences from a finite number of observations, but, he gave a very partial answer to the problem.
	
	\citet*{hurwiczuzawa71} and \citet*{mascolell77-2} gave necessary and  sufficient conditions for  the integrability of a demand function, the derivation of  the generating ordinal utility function,  and for a demand function to identify preferences. It is worth noting that, though Lipschitz continuity of preferences or, alternatively, Lipschitz continuity of income expansion paths is sufficient for identification, conditions on preferences that guarantee that demand is Lipschizian in income are not known. More importantly, though integrability answers our motivating question affirmatively for an infinite number of observations, it says nothing for the case of a finite number of observations. It does not even address the question of asymptotics.
	
	\citet*{mascolell78} gave sufficient conditions to ensure that, for a nested, increasing sequence of demand data that, at the limit, cover a dense subset of consumption choices, any associated sequence of preference relations converges to the unique preference relation that generated demand. In a recent paper, \citet*{chambersetal19} considered the case of pairwise choice and showed that convergence fails. When data sets that are collections of choices from pairwise comparisons of alternatives become dense, generating preferences may not convergence to the unique underlying preference even if the underlying preference relation is continuous. Convergence obtains if the data satisfies a condition implied by, but weaker than monotonicity. For choices generated by a monotone preference relation, convergence follows from \citet*{forgesminelli06}.
	
	\cite{brownmatzkin96} extended \cite{afriat67} to a framework where one has a finite number of observations on profiles of equilibrium endowments and Walrasian equilibrium prices. That is, observations on the equilibrium manifold. \cite{chiapporietal04} showed that the equilibrium manifold locally identifies individual preferences. \cite{brownmatzkin90} and in particular \cite{matzkin06} proved a global version of the result.

	In this paper, we pose the question of inference from a finite number of observations. We consider two classes of finite data. First, we examine the case where one has observations on an individual's demand at different prices and incomes. Second,  for the case of an exchange economy, we assume that only equilibrium prices and profiles of individual endowments are observable. In both settings, we consider sequences of nested observations that become dense in the chosen domains as in \cite{mascolell78}\footnote{Equivalently one can assume  
	that observations are drawn randomly from a uniform distribution of exogenous variables and that the endogenous data is generated by utility maximizing individuals. However, in this case all our statements only hold with probability one.}
	In the case of demand, we allow for all positive prices and incomes. In the case of equilibrium, we consider all aggregate endowments and any compact set of income distributions.
	
	For both setups, we show that, given any two consumption bundles between which an individual is not indifferent, after sufficiently many observations, we can infer the individual's choices. We first illustrate this in a preliminary example below. Instead of considering market data, we assume, there, that we observe an individual's choices among pairwise alternatives. In that context, it is very easy to show that, assuming monotonicity and continuity  of preferences, the result must hold true. For the more complicated case of individual demand, this fact follows from a result by \cite{mascolell77-2}, who shows that,  if a consumption bundle $x$ is preferred to a bundle $ y, $ then there must exist prices and incomes such that $x$ is revealed preferred to $y$. We show that, if $x$ is strictly preferred to $y,$ and, if prices and incomes are randomly drawn, then eventually (after finitely many draws) we must observes prices and incomes so that $ x $ is strictly revealed preferred to $y$. From this result, we deduce that, for any price vector $p$ and income $\tau, $ and any $ \epsilon > 0, $ there are finitely many observations that allow us to determine individual demand at $ (p,\tau) $ within $ \epsilon $. Our main result is that the situation is identical for the case where only equilibrium prices are observable. If $ x $ is strictly preferred to $y$ by some individual, then sufficiently many observations of equilibrium  price and endowment profiles suffice to prove that the individual strictly revealed prefers to $ x $ to $y$.
	
	A question that arises directly from our analysis is whether any of these results extend to equilibrium  comparative statics. The transfer paradox, introduced by \cite{leontief1936-1}, makes it clear that knowledge of utility functions is necessary in order to identify even the direction of welfare effects of transfers. We show that, while predictions of exact comparative statics are generally impossible, we can predict approximate equilibrium prices from finite data. This is true for both setups, independently of whether observations consist of individual choices as prices and incomes vary or of equilibrium prices as the profile of individual endowments varies.
	
	The rest of the paper is organized as follows. Section 1 lays out some preliminaries and provides a simple example that illustrates the main idea of the paper. Section 2 considers the case where individual demand is observable, Section 3 focuses on equilibrium.

	\section{Preliminaries}
	\label{sec:1}
	
	We collect definitions and results needed for our analysis.

	\subsection{Preferences}

	An individual has a preference relation $ \succeq $ over $ {\mathbf X} \subset \Rea^L_+. $ The preference relation $ \succeq $ is upper semi-continuous if, for every $ x\in {\mathbf X},$ the upper contour set, $ {\mathbf R}_+(x)=\{ y : y \succeq x \}, $ is closed. It is continuous  if, for every $ x\in {\mathbf X},$ the upper contour set as well as the lower contour set, $ {\mathbf R}_-(x)=\{ y : x \succeq y \}, $ is closed. It is monotonically increasing  (or simply monotone) if $ x >y $ implies $ x \succ y.$ It is convex if $ x \succeq y $ implies that $ \lambda x + (1-\lambda) y \succeq y $ for all $ \lambda \in [0,1], $ and it is strictly convex if $ x \succeq y $, $ x \ne y $ implies that $ \lambda x + (1-\lambda) y \succ y $ for all $ \lambda \in (0,1). $ 
	
	Following \cite{mascolell77-2}, Remark 4, continuous, monotone and convex preferences $ \succeq $ are Lipschitzian if, for every $ r>0, $ there are numbers, $ H>0 $ and $ \epsilon > 0, $ such that, if 
	\[
	x,y,z \in {\mathbf X}_r = \{ w \in  {\mathbf X}: \frac{1}{1+r} \one \le w \le (1+r) \one 	\}, 
	\] 
$ x \sim y $ and $ \| x-z \| < \epsilon $, then 
	\[
	\delta(x,{\mathbf R}_+(z)) \le H \delta(y,{\mathbf R}_+(z)),
	\] 
where, for a set $ {\mathbf A} \subset {\mathbf X} $, $ \delta(x,{\mathbf A}) = \inf_{y \in {\mathbf A}} \| x-y \|.$

	\subsection{Demand}

	Budget sets are 
	\[
	{\mathbf B}(p)=\{ x \in {\mathbf X} : p \cdot x \le 1 \}, \quad p \gg 0,
	\]
and the Walrasian demand correspondence is defined by  
	\[
f^{\succeq}(p) =\{ x \in {\mathbf B}(p): x \succeq y, \mbox{ for all } y\in {\mathbf B}(p)\}. 
	\] 

	For a monotone, strictly convex and continuous  preference relation $ {\succeq}, $ we use the same notation, $ f^{\succeq}(\cdot) $, to denote the (continuous) Walrasian demand function that is generated by $ {\succeq} $.

	\cite{mascolell77-2} shows that, if preferences are, in addition, Lipschitzian, then
	\[
	\succeq = \succeq' \quad \Leftrightarrow \quad                                                                    	f^{\succeq}(\cdot) = f^{\succeq'}(\cdot). 
	\]
This result we refer to as identification. As will become clear below, it is a necessary condition for most of our results on inference from finite data.

	\subsection{Equilibrium}

	We consider a pure exchange economy with $H$ individuals, $ h \in {\mathbf H}, $ and $L$ commodities. Consumption sets are $ {\mathbf X}=\Rea^L_{++}, $ and each individual, $h$ has continuous, monotone and strictly convex preferences over $ {\mathbf X}, $ that we denote by $ \succeq^h, $ and endowment $ e^h \in \Rea^L_{++}. $ A profile of preferences across individuals is $ \succeq^{\mathbf H}, $  and a profile of endowments across individuals is $ e^{\mathbf H} \in \Rea^{HL}_{++} $. The equilibrium correspondence is defined as 
	\[
	{\mathbf P}^{\succeq^{\mathbf H}} (e^{\mathbf H})=\{ p \in \Delta^{L-1} : \sum (f^{\succeq^h}(\frac{p}{p \cdot e^h}) - e^h)=0 \} .
	\] 
\cite{chiapporietal04} and \cite{matzkin06} derive sufficient conditions on preferences that ensure classical identification in the sense that, for any two profiles of preferences $ \succeq^{\mathbf H} $ and $ \widetilde{\succeq}^{\mathbf H}, $
 	\[
	\succeq^{\mathbf H} = \widetilde{\succeq}^{\mathbf H} \Leftrightarrow
   {\mathbf P}^{\succeq^{\mathbf H}} (\cdot)={\mathbf P}^{\widetilde{\succeq}^{\mathbf H}} (\cdot) .
   \]
As \cite{balasko04} points out, without restrictions on the domain of the equilibrium correspondence, this problem becomes trivial since it reduces to the individual problem when individual endowments are on the boundary for all but one individual.
   
   	More interestingly, \cite{chiapporietal04} and \cite{matzkin06} provide a local version of the result. For our setting of finitely many observations, to state the local result it is useful to observe that this problem is identical to the problem of classical identification of aggregate demand.
	
	For individual incomes $ (w^1,\ldots,w^H) \in \Rea^H_{++}, $ we define the
aggregate demand function
	\[
	d^{\succeq ^{\mathbf H}}(p,w^1,\ldots, w^H) = \sum_{h \in {\mathbf H}} 			f^{\succeq^h}	(\frac{p}{w^h}) .
	\]

	Note that, for any $ e \in \Rea^L_{++} $ and any profile of incomes $ (w^1,\ldots, w^H),$
	\[
	p \in {\mathbf P}^{\succeq^{\mathbf H}}(\frac{w^1}{\sum_{h \in {\mathbf H}} w^h} e, \ldots, \frac{w^1}		{\sum_{h \in {\mathbf H}} w^h} e)  \Leftrightarrow d(p,w^{\mathbf H})= e.
	\]
Therefore, for two profiles of preference relations $ \succeq^{\mathbf H} $ and $\widetilde{\succeq}^{\mathbf H}, $
	\[
	d^{\succeq ^{\mathbf H}}(\cdot) =d^{\widetilde{\succeq} ^{\mathbf H}}(\cdot) 
	\Leftrightarrow   {\mathbf P}^{\succeq^{\mathbf H}} (\cdot)={\mathbf P}^{\widetilde{\succeq}	^{\mathbf H}} (\cdot).  
	\]
To state the local version of the result, we define 
	\[
	{\mathbf W} = \times_{h\in {\mathbf H}} [\underline{w}_h,\overline{w}^h],
	\]
for some bounds $ 0 < \underline{w}_h < \overline{w}_h $, $ h \in {\mathbf H} $.
While prices are observed globally, it suffices to observe incomes locally. As \cite{matzkin06} points out, this differentiates crucially this setting from \cite{balasko04}, whose argument relies on allowing incomes to be at the boundary.
We make the following high level assumption on preferences:

	\begin{assumption} \label{ass1}

	For all $ \succeq^{\mathbf H} \neq \succeq^{ \prime\mathbf H},$ there exists a $ \bar p \in \Rea^{L}_{++} $ as well as incomes $ \bar w^{\mathbf H} \in {\mathbf W}, $
such that
	\[
	d^{\succeq^{\mathbf H}}(\bar p, \bar w^{\mathbf H})
	\neq d^{\succeq^{\prime\mathbf H}}(\bar p, \bar w^{\mathbf H}). 
	\]

	\end{assumption}

	By the continuity of the aggregate demand function, the assumption implies that there must exist an open neighborhood of prices and incomes where aggregate demand differs.
\cite{matzkin06} gives assumptions on individual demand functions that are sufficient for \ref{ass1} to hold. \cite{chiapporietal04} derive sufficient conditions on preferences that ensure the assumption. These are conditions on the income effects of the individual demand functions. To state the assumptions, let $ v^h_l $ denote the derivative of demand of individual $h$ for commodity $l$ with respect to income. We require that, for every individual,
	
	\begin{enumerate}
        
        \item   for every commodity, the income effect $ v^h_l(\cdot) $ is a twice differentiable function of income, $w$ and
        \[ 
        \frac{\partial v^h_l}{\partial w} \neq 0,
        \]
        
        while

        \item there exist commodities, $ m \neq 1 $ and $ n \neq 1 $ , such that \[ \frac{\partial}{\partial w}(\ln \frac{\partial v^h_m}{\partial w}) \neq \frac{\partial}{\partial w}(\ln \frac{\partial v^h_n}{\partial w}). \]

        \end{enumerate}
	
	These two assumptions from \cite{chiapporietal04} ensure that income effects  do not vanish for any commodity, while there are two commodities for which the partial elasticities of the income effects with respect to revenue do not vanish. The assumptions only apply to the case of at least three commodities.

	\subsection{A preliminary result}

	In order to motivate our results it is useful to illustrate the main idea in a setting that is much simpler than the demand setting. For this, we consider the simple binary choice problem. That is, we observe a sequence of choice sets
   	\[
	({\mathbf A}_k=\{x_k,y_k \}, \quad \  x_k,y_k \in {\mathbf X},  \quad k = 1, \ldots)
	\] 
and the associated choices by an individual who has a preference relation over $ {\mathbf X}. $ We assume that, as $ n \rightarrow \infty, $ the set $ \cup_{i=1}^n {\mathbf A}_i $ becomes dense in $ {\mathbf X} \times {\mathbf X}.$ Given observations $ {\mathbf A}_k $, $ k=1,\ldots, n $ we say that $ x $ is strictly revealed preferred to $y$, $ x \succ^{R_n} y $,
if, for every preference relation  that is consistent with the $n$ observed choices, $x$ is strictly preferred to $y$.  

	In the absence of restrictions on preferences relations, this is a meaningless definition. However,  it is easy to see that the assumption of monotone preferences does allow for meaningful inference.
	
	If an individual chooses a commodity bundle $ x $ over another bundle $y$, as in Figure \ref{fig:1}, it is revealed that she must prefer any bundle in the set $ {\mathbf X}=\{ x' \gg x \} $ to any bundle in the set $ {\mathbf Y}=\{ y' \ll y \} $.

	\begin{figure}[tb]
	
	\begin{center}
		\includegraphics[scale=0.25]{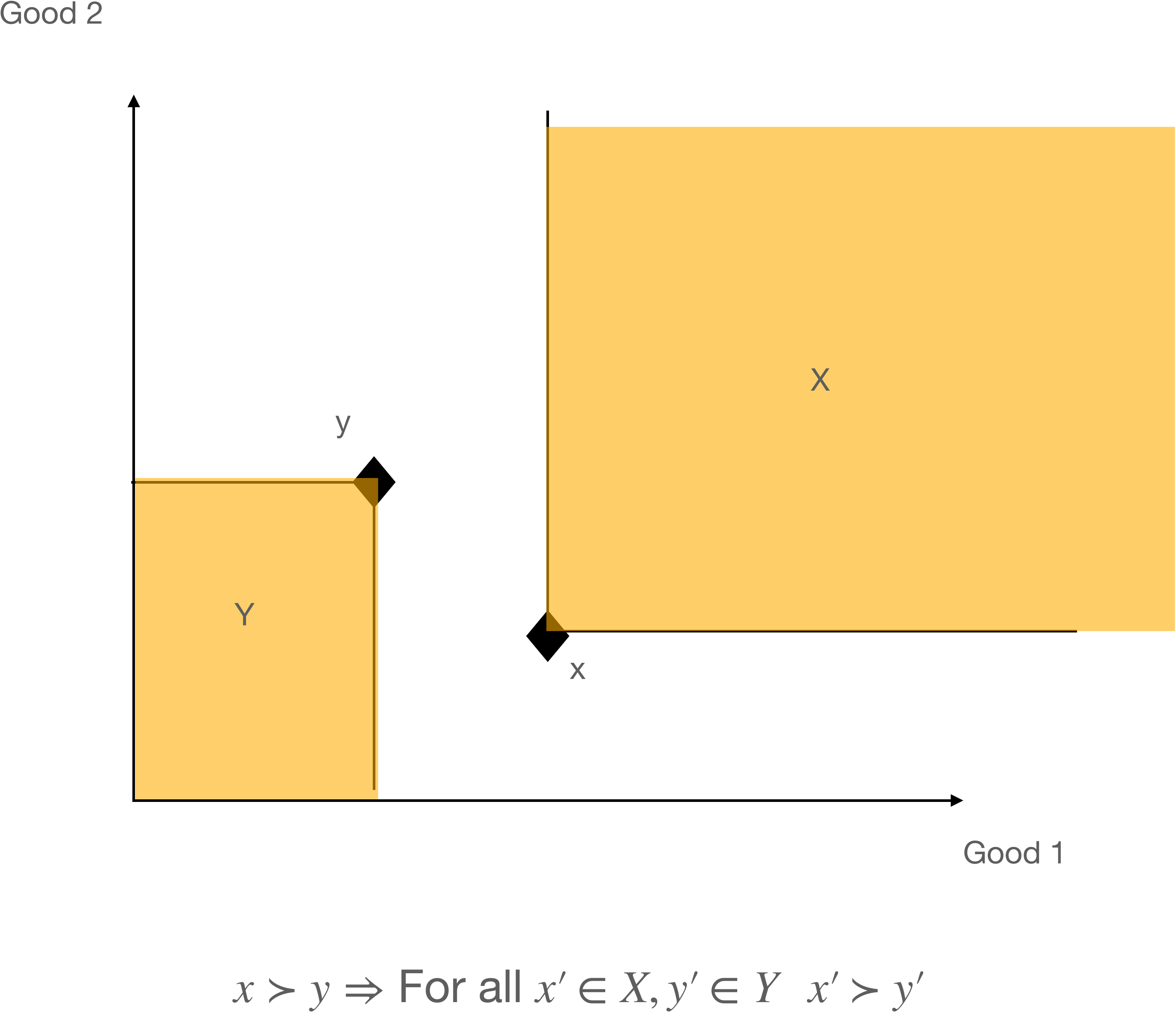}\\
	\end{center}
	\vspace{-0.5cm}
	\footnotesize{}.
	\caption{Revealed preference from pairwise choice}
	\label{fig:1}
	
	\end{figure}
		
	It is then straightforward to show the following result:
  	
  	\begin{proposition} \label{prop1}
  	
  	Suppose preferences are monotone and continuous. For any  $ x, y \subset int({\mathbf X}) $ with $ x \succ y, $ there is an $n$ such that $ x \succ^{R_n} y $.

	\end{proposition}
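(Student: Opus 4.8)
The plan is to use monotonicity to \emph{trap} the pair $(x,y)$ inside a single observed choice. The key observation is this: if among the data we ever see a bundle $a$ chosen over a bundle $b$ with $a \ll x$ and $b \gg y$, then $x \succ^{R} y$ follows from that one observation. Indeed, let $\succeq'$ be any monotone preference consistent with the choice of $a$ over $b$, so $a \succeq' b$. Monotonicity gives $x \succ' a$ (since $x \gg a$) and $b \succ' y$ (since $b \gg y$), and transitivity yields $x \succ' a \succeq' b \succ' y$, hence $x \succ' y$. So the whole problem reduces to showing that such a ``straddling'' observation must eventually occur.

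First I would produce an open box of such favourable observations using continuity. Since $\succeq$ is continuous (equivalently, representable by a continuous utility $u$ with $u(x) > u(y)$), the strict relation is open: there are open neighbourhoods $U \ni x$ and $V \ni y$ with $a \succ b$ for all $a \in U$ and $b \in V$ (take $U=\{u > c\}$ and $V=\{u < c\}$ for some $c$ strictly between $u(y)$ and $u(x)$). Because $x,y \in int(\mathbf X)$, for all small $t>0$ the bundles $x - t\one$ and $y + t\one$ lie in $int(\mathbf X)$, belong to $U$ and $V$ respectively, and satisfy $x - t\one \ll x$ and $y + t\one \gg y$. Hence
\[
O = \big( U \cap \{a : a \ll x\} \big) \times \big( V \cap \{b : b \gg y\} \big)
\]
is a nonempty open subset of $\mathbf X \times \mathbf X$.

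Finally I would invoke density. As $\cup_i \mathbf A_i$ becomes dense in $\mathbf X \times \mathbf X$, there is an index $m$ with an observed pair whose two bundles are, say, $a=x_m \in U$ with $x_m \ll x$ and $b=y_m \in V$ with $y_m \gg y$. Since $x_m \in U$ and $y_m \in V$, the data-generating preference strictly ranks $x_m \succ y_m$, so $x_m$ is indeed the chosen bundle; every consistent monotone preference therefore satisfies $x_m \succeq' y_m$, and the trapping argument of the first paragraph delivers $x \succ' y$. This holds for every monotone preference consistent with the first $n$ observations once $n \ge m$, which is exactly $x \succ^{R_n} y$. The only delicate point --- and the step I would be most careful about --- is arranging that the \emph{chosen} bundle is the one lying below $x$ while the rejected bundle lies above $y$: continuity guarantees the correct bundle wins throughout the box $U \times V$, and it is precisely the interiority of $x$ and $y$ that lets me slide the box strictly below $x$ and strictly above $y$, so that monotonicity produces the strict comparisons $x \succ' a$ and $b \succ' y$ rather than merely weak ones.
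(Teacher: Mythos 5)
Your proof is correct and follows essentially the same route as the paper's: use density together with continuity of $\succeq$ to guarantee an observed pair $(\bar x,\bar y)$ with $\bar x$ below $x$, $\bar y$ above $y$, and $\bar x$ chosen, then apply monotonicity and transitivity to force $x \succ' y$ for every consistent preference. Your write-up merely spells out more carefully the two points the paper leaves implicit --- openness of the strict relation and the fact that strictness of $\bar x \succ \bar y$ pins down which bundle is chosen.
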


	\begin{proof}

	Given and $x,y $ with  $ x \succ y $, as $ \cup_{i=1}^n {\mathbf A}_i $ becomes dense, by continuity, there must exist a $k$ with  $ {\mathbf A}_k=\{ \bar x, \bar y \},$ such that $\bar x < x $ and $ \bar y > y, $ and
		\[
		f( {\mathbf A}_k )= \bar x .
		\] 
By monotonicity, any $ {\succeq}'$ that rationalizes $ ({\cal A}_k,f) $ must satisfy $ x {\succ'} y$. 
	
	\end{proof}

	\bigskip

	The result can also be inferred from \cite{chambersetal19}. The proof  here  is much simpler, but it does not show identification over all of $ {\mathbf X} $ in the limit. 
	
	The point is that the assumption of monotonicity of preferences allows us to make statements about arbitrary $x,y$ which are not previously observed.

	\section{Individual demand}
	\label{sec:2}

	Throughout this section, we assume that we observe market choices of a single individual who has preferences $ \succeq $ over $ {\mathbf X} = \Rea^L_{+} $. We consider a possibly infinite sequence or prices $ (p_k: k=1,\ldots) $ that become dense in $ \Rea^L_{++} $ as $k \rightarrow \infty $. We hold the sequence fixed throughout the argument, and we assume  that $n$ observations consist of the first $n$ prices of this sequence together with optimal choices, $ (p_k, x_k), k=1,\ldots, n, $ where
 $ x_k = f^{\succeq} (p_k), $ for all $ k=1,\ldots, n.$ With a slight abuse of language, we sometimes refer to this as a sequence of random observations. This is to emphasize that we impose no restrictions on the prices except that they become dense in the limit.
 
 	The two questions we pose are as follows:

	\begin{enumerate}
    
    	\item Given arbitrary $ x,y \in {\mathbf X}, $ with $ x \not\sim y,$ can we determine how  the individual shall choose from the set  $ \{x, y\}$ after observing some number $n,$ of market choices?
    
    	\item Given arbitrary prices $ p \in \Rea^L_{++}, $ can we predict the individual's demand at these prices from some number $n,$ of observations of choices?

	\end{enumerate}

	The answers to both questions turn out to be ``yes." But, it is important to point out that posing the questions slightly differently leads to the opposite conclusion: Fixing any number of observations, $n$, there obviously always exist $x,y \in {\mathbf X}$ for which one cannot determine the individual's preference.

	It is well known that a finite number of observations can be rationalized by any strictly convex, continuous and monotone preference relation if and only if they satisfy the strong axiom of revealed preferences. For completeness it is useful to state the strong axiom.
 
	\begin{definition}

	Observations satisfy the strong axiom of revealed preferences (SARP) if for every ordered subset $ \{i_1,i_2,...,i_m\} \subset {\mathbb N} $ with
$ x_{i_k} \ne x_{i_j} $ for all $k,j,$ and with
	\[
	\begin{array}{ccc}
	p_{i_1} \cdot x_{i_2} & \le & p_{i_1} \cdot x_{i_1},\\
	p_{i_2} \cdot x_{i_3} & \le & p_{i_2} \cdot x_{i_2},\\
 	& \vdots & \\
	\end{array}
	\]
it must be the case that 
	\[
	p_{i_m} \cdot x_{i_1} >  p_{i_m} \cdot x_{i_m}. 
	\]

	\end{definition}

	\subsection{Identification of pairwise choices}

	Reconstruction of partial preferences from a finite number of observations on prices and choices is an obvious application of revealed preference analysis. \cite{varian82} made the point. Figure \ref{fig:2} illustrates the basic idea. If we observe that a bundle $x$ is chosen at some prices $p$ and a bundle $y$ is chosen at prices $q$, and if $y$ lies below the budget line of $x$ we can infer that $x$ is strictly preferred to $y$ and that all bundles strictly greater than $x$ are strictly preferred to all bundles in the budget set at prices $q$.
  	
	\begin{figure}[tb]
	
	\begin{center}
		\includegraphics[scale=0.25]{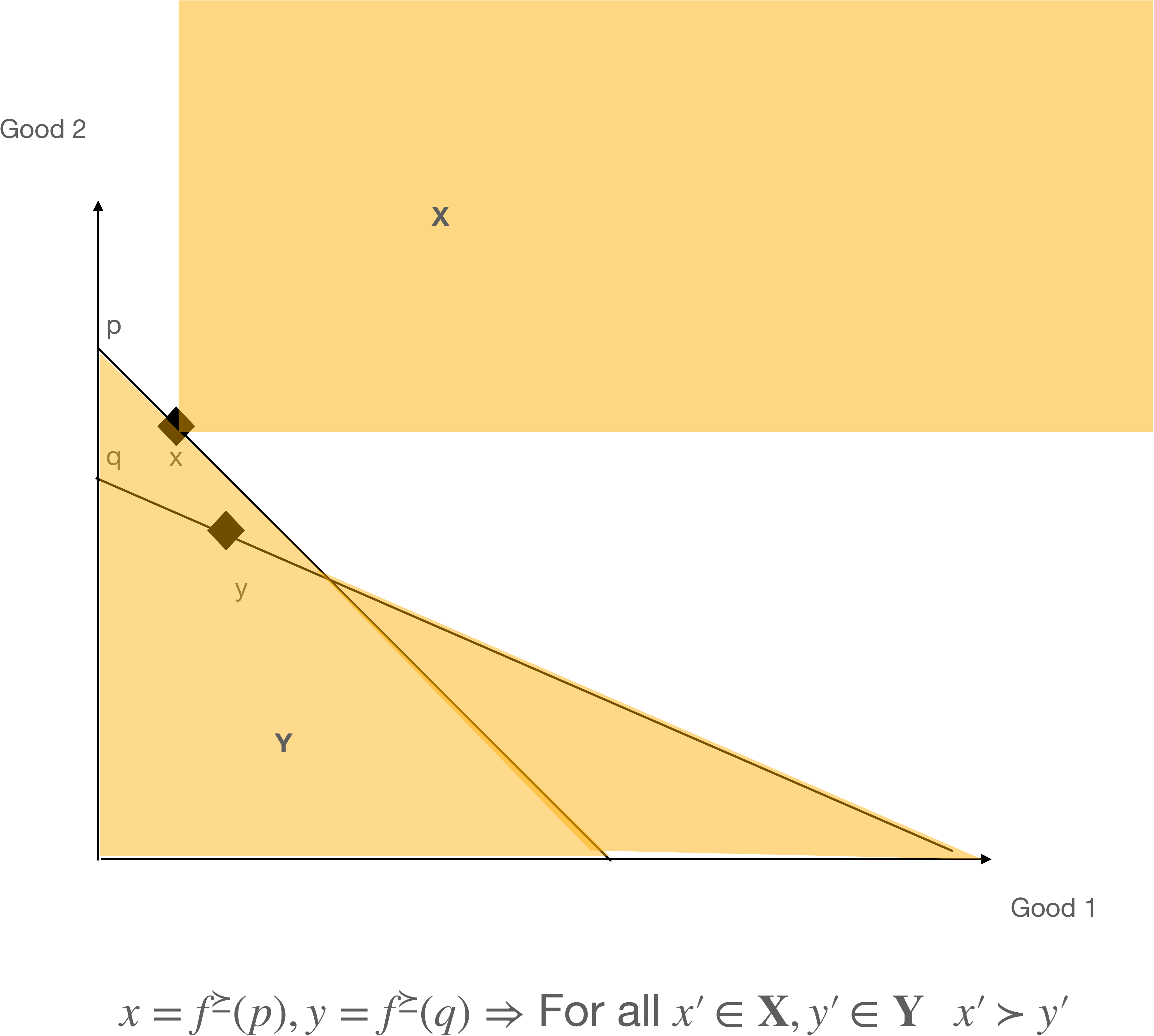}\\
	\end{center}
	\vspace{-0.5cm}
	\footnotesize{}.
	\caption{Revealed preference from demand}
	\label{fig:2}
	
	\end{figure}

  However, the question whether, given a sequence of random observations and any $x \succ y $ there must be some finite number of observations after which $ x $ is revealed preferred to $y$ is not addressed in that literature.
 
 	For $x,y \in int({\mathbf X}),$ we say that $x$ is strictly revealed preferred to $y$ through $n$ observations, $ x \succ^{R_n} y $,  if there are $k$ observations indexed by $ i_1, \ldots, i_k \in \{1,\ldots,n \}$, such that $x \ge x_{i_1}$, $ p_{i_j} \cdot x_{i_{j+1}} < p_{i_j} \cdot  x_{i_j} $ for all $ j=1,\ldots k $ and {\color{green}{$ x_{i_N}\ge  y $}}. Evidently,  if $ x \succ^{R_n} y, $ for some $n,$ then $ x \succ y $. Our first result is a converse: if $ x \succ y, $ then there is an $n$ such that $ x \succ^{R_n} y $.

 	The result builds on \cite{mascolell77-2}, Remark 12 and, in parts, our proof closely follows the argument there. The idea of the proof is to show that for $ y \in {\mathbf X} $ the set of $ z \in {\mathbf X} $ that are not strictly revealed preferred to $y$ for any $n$ (call this set $ T_y $) is identical to the upper contour set of $y:$ the set of commodity bundles that are weakly preferred to $y$. In order to do so, we follow \cite{mascolell77-2} and construct a new preferences relation $ \succeq '$ that is identical to $ \succeq $ precisely when $ T_y $ is identical to the upper contour set at $y$. The key is to show that $ \succeq' $ generates the same demand function as $ \succeq $. Since $ \succeq $ is Lipschitzian, this implies that the two preference relations must be identical.

	\begin{theorem} \label{thm1}

	If $ {\succeq} $ is continuous, Lipschitzian, monotone and strictly convex, then $x\succ y $ if and only if there is some $n,$ such that $x \succ^{R_n} y$.

	\end{theorem}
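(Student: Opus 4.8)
The equivalence splits into an easy implication and the substantive one. That $x \succ^{R_n} y$ implies $x \succ y$ is the elementary direction already flagged in the text: each strict revealed preference in the chain arises because a strictly cheaper bundle was available at an optimum, so strict convexity makes the chosen bundle strictly better, and monotonicity handles the two endpoints, giving $x \succeq x_{i_1} \succ \cdots \succ x_{i_k} \succeq y$ and hence $x \succ y$. The real content is the converse, and my first move is to recast it as a single set identity. Fixing $y \in int(\mathbf{X})$, I set $S_y = \bigcup_n \{ z : z \succ^{R_n} y \}$ and $T_y = \mathbf{X} \setminus S_y$, the bundles never revealed strictly preferred to $y$ through any finite number of observations. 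The easy direction says $S_y \subseteq \{ z : z \succ y \}$, i.e.\ $\mathbf{R}_-(y) \subseteq T_y$. I claim the theorem reduces to $T_y = \mathbf{R}_-(y)$: once this holds, any $x \succ y$ satisfies $x \notin \mathbf{R}_-(y) = T_y$, so $x \in S_y$, which by definition of $S_y$ means $x \succ^{R_n} y$ for some finite $n$. Thus the finite-$n$ conclusion is automatic, and I only need the set identity.

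I would next record the structural features of $T_y$ that make it behave like a lower contour set. It is comprehensive from below: if $z' \le z$ and $z'$ admits a revealing chain (so $z' \ge x_{i_1}$ for the chain's first observation), then the larger $z$ admits the same chain, so $S_y$ is upward closed and $T_y$ downward closed. Crucially, I would use density of the price sequence to show that $S_y$ is unchanged if the defining observations are allowed to range over all prices $p \in \Rea^L_{++}$ rather than only over the sequence $(p_k)$: the chain inequalities $p_{i_j} \cdot x_{i_{j+1}} < p_{i_j} \cdot x_{i_j}$ are strict, hence stable under small perturbations of the prices, so any revealed preference realized by some price is realized by prices arbitrarily near it, and therefore, by density, by prices in the sequence. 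This reduces the problem to the full-price setting of \cite{mascolell77-2}.

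The heart of the argument, following \cite{mascolell77-2}, Remark 12, is to assume $T_y \supsetneq \mathbf{R}_-(y)$ and derive a contradiction by building a competing preference. Using (the closure of) $T_y$ as an enlarged lower contour set for $y$, I would construct a relation $\succeq'$ that enlarges the lower contour set of $y$ to $T_y$ and otherwise follows $\succeq$, and I would verify that $\succeq'$ is again continuous, monotone, strictly convex and Lipschitzian. The decisive step is to show $f^{\succeq'}(p) = f^{\succeq}(p)$ for every $p$. Only bundles in $T_y$ have changed rank, so the $\succeq$-optimal bundle $f^{\succeq}(p)$ can fail to be the unique $\succeq'$-optimum on $\mathbf{B}(p)$ only if $y \in \mathbf{B}(p)$ and $y$ overtakes it, which by construction requires $f^{\succeq}(p) \in T_y$; but whenever $y$ lies in the budget the observation at $p$ (together with nearby prices supplied by density) reveals $f^{\succeq}(p)$ strictly preferred to $y$, placing $f^{\succeq}(p)$ in $S_y$, not $T_y$. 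Hence the unique $\succeq$-maximizer on each budget stays the unique $\succeq'$-maximizer.

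Finally I would invoke identification. Since $\succeq$ is Lipschitzian, the quoted result of \cite{mascolell77-2} gives $f^{\succeq'} = f^{\succeq} \Rightarrow \succeq' = \succeq$. But $\succeq' = \succeq$ collapses the enlarged lower contour set of $y$ back to the original one, i.e.\ $T_y = \mathbf{R}_-(y)$, contradicting $T_y \supsetneq \mathbf{R}_-(y)$. This establishes the set identity and hence the theorem. I expect the main obstacle to be the construction of $\succeq'$: one must enlarge the lower contour set of a single bundle while preserving strict convexity and the Lipschitz property (so that identification applies) and simultaneously proving demand-equivalence. The demand-equivalence is where density of prices and strict convexity are indispensable, since they are precisely what prevents the enlarged region $T_y$ from ever producing a new maximizer.
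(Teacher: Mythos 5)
Your proposal follows the same route as the paper's own proof, in mirror image: the easy direction, the reduction of the theorem to a set identity for a ``never revealed'' set $T_y$, the construction of a competing preference relation along the lines of \cite{mascolell77-2}, Remark 12, demand equivalence via density of the observed prices, and the conclusion via Mas-Colell's identification theorem for Lipschitzian preferences. (The paper fixes $y$ as the dominating bundle and shows that the set of $z$ to which $y$ is never revealed strictly preferred equals the upper contour set of $y$; you fix $y$ as the dominated bundle and show that the set of $z$ never revealed strictly preferred to $y$ equals ${\mathbf R}_-(y)$. This difference is immaterial.)

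There is, however, one genuine gap, and it sits exactly where you yourself locate the ``main obstacle'': you insist on verifying that the constructed $\succeq'$ is continuous, \emph{strictly convex} and \emph{Lipschitzian}, ``so that identification applies.'' That verification is both unattainable and unnecessary. Unattainable, because the competing relation must be convexified to be a preference at all --- the paper defines $u \succeq' v$ through membership in $\mathrm{conv}(T_y \cup {\mathbf R}_+(v))$ --- and convexification produces flat portions of indifference surfaces, destroying strict convexity; nor is there any way to certify a Lipschitz bound for such a constructed object. Unnecessary, because Theorem 2' of \cite{mascolell77-2} is asymmetric: a Lipschitzian, continuous, monotone, convex preference is identified by its demand within the class of all \emph{upper semi-continuous, monotone, convex} preferences. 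Accordingly, the paper verifies only that $\succeq'$ is upper semi-continuous, monotone and convex; the Lipschitz hypothesis is imposed only on the true $\succeq$. Without invoking this asymmetric form, your argument stalls at the construction step. A second, smaller lacuna: in your demand-equivalence step you consider only the possibility that $y$ itself overtakes $f^{\succeq}(p)$; under your $\succeq'$, any budget-feasible bundle outside $T_y$ (i.e.\ one already revealed strictly preferred to $y$) would also overtake a demoted optimum, so this case too must be excluded --- this is what the paper's transitivity argument ($y$ revealed preferred to $f^{\succeq}(p_j')$, which is revealed preferred to $u$, contradicting $u \in T_y$) is doing, and your proof needs the analogous chaining step.
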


	\begin{proof}

	The ``if'' part is clear.
	
	\bigskip

	For the converse, given any $ y \in int({\mathbf X}), $ define
	\[
	T_y =\{ z \in {\mathbf X}: \mbox{There is no } n : y \succ^{R_n} z \},
	\]
	and define a new preference relation $ \succeq '$ by
	\[
	u {\succeq}' v \mbox{ if } \left\{ \begin{array} {ll}
	u \in \mbox{conv}(T_y \cup {\mathbf R}_+(v)) & \mbox{ if } v \notin T_y \\
	u \in T_y \cap {\mathbf R}_+(v) & \mbox{ if } v \in T_y 
	\end{array} \right.,
	\]
where $ {\mathbf R}_+(\cdot) $ denotes the upper contour set under the preference relation $ \succeq $.

	It can be verified that $ {\succeq}'$ is upper-semi continuous, monotone and convex, and we can define the demand correspondence  $ f^{\succeq'} $. We shall argue below that it  suffices to show that $ f^{\succeq}(\cdot)=f^{\succeq'}(\cdot) $. To prove this, note first that $f^{\succeq'}(\cdot) $ is non-empty for all $p$. Then suppose that for some $p \in \Rea^L_{++} $,  $ u \neq v= f^{\succeq}(p) $ but $ u \in f^{\succeq'}(p) $.  By the definition of $ {\succeq}', $ this can only be the case if  $ u \in T_y $ but $ v \notin T_y $. By the continuity of $ f^{\succeq}(\cdot)$, for sufficiently large number of observations, $n$, there must be a $j\in \{1,\ldots, n\}$ and a $p_j'$ sufficiently close to $p$ so that $ f^{\succeq} (p) \notin T_y $ but $ p_j' \cdot f^{\succeq} (p_j') > p_j' \cdot u $. This is a contradiction to transitivity since we would have that $ y $ is revealed preferred to $ f^{\succeq}(p_j') $ which is revealed preferred to $u$ and $ u $ cannot be in $ T_y $.

	Therefore, $ {\succeq} $ and $ {\succeq}' $ generate the same demand functions. By Theorem 2' in \cite{mascolell77-2}, this implies that the two preference relations coincide, which is only possible if $ T_y $ is equal to the upper contour set of $ {\succeq}$ at $y$. This completes the argument.

	\end{proof}

	The assumption that preferences are Lipschitzian cannot be dispensed with. This is surprising since with a finite number of observations one cannot test whether preferences are Lipschitzian. Nor can one test whether they are strictly convex. However, only the assumption of Lipschtitzian and strictly convex preferences guarantees that for sufficiently many observations the Afriat-inequalities no longer have a solution 
	
		Note also that in we assume prices become dense in all of $ \Rea^L_{++} $. For a given pair $ x,y \in {\mathbf X}$ of consumption bundles in the interior of the consumption set, and if indifference surfaces through interior bundles have closures in the interior, it suffices to observe prices in a compact set.

    \subsection{Identification of demand}
  	
	For any two bundles between which the individual is not indifferent, a finite number of random observations suffice to predict how the individual shall choose. In a market setting, it may be more relevant, however, to ask how the individual shall choose given arbitrary prices and incomes.
  
  	Analogously to the analysis above, we can define a revealed demand correspondence as
  	\[
	x^{R_n}(p) = \{ x \in {\mathbf X}: p \cdot x = 1, \mbox{ there is no } x' \in {\mathbf X}, p \cdot x' \le 1, x' \succ^{R_n} x \}. 
	\]  
 For each $p,$ the set $ x^{R_n}(p) $ can be defined by a finite number of linear inequalities that can be computed using Fourier-Motzkin elimination.
  
	It is clear that whenever preferences are convex and monotone,
	\[
	f^{\succeq}(p) \in x^{R_n}(p), \quad  n=1, \ldots.
	\]
 It is also clear that, from $n$ observations on demand, one cannot recover the demand function, and $ x^{R_n}(p) $ shall not be single valued. The following theorem shows that, for sufficiently large $n,$ demand can be arbitrarily well approximated by the revealed demand correspondence:

	\begin{theorem}
	\label{thm2}
	
	Suppose preferences are continuous, strictly convex, monotone and Lipschitzian.
Given any $ p \in \Rea^L_{++} $ and any $ \epsilon >, 0 $ there exists an $n$ such that
	\[
	x^{R_n}(p) \subset \{ x \in {\mathbf B}(p): \| f^{\succeq} (p) -x \| \le \epsilon \}.
	\]  
	
	\end{theorem}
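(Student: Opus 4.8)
The plan is to leverage Theorem \ref{thm1} together with the continuity of $f^{\succeq}$, using a compactness argument on the budget simplex. First I would observe that for a fixed price vector $p$, the budget set $\mathbf{B}(p)$ is compact, and the demand $f^{\succeq}(p)$ is the unique maximizer of $\succeq$ over $\mathbf{B}(p)$. Suppose, toward a contradiction, that the conclusion fails for some $p$ and some $\epsilon>0$. Then for every $n$ there exists a point $x_n \in x^{R_n}(p)$ with $\|f^{\succeq}(p)-x_n\|>\epsilon$. Since the $x_n$ all lie in the compact budget surface $\{x\in\mathbf{B}(p): p\cdot x=1\}$, by passing to a subsequence we may assume $x_n \to x^*$ with $\|f^{\succeq}(p)-x^*\|\ge\epsilon$, so in particular $x^*\neq f^{\succeq}(p)$.

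The next step is to derive a contradiction from $x^*\neq f^{\succeq}(p)$. Since $f^{\succeq}(p)$ is the \emph{strict} maximizer over the budget set (strict convexity gives uniqueness, monotonicity and continuity give existence), and $x^*$ lies in the budget set, we have $f^{\succeq}(p) \succ x^*$. I would then perturb: because $x^*\neq f^{\succeq}(p)$ and preferences are continuous and monotone, I can find a bundle $\bar x$ with $\bar x < f^{\succeq}(p)$ (strictly below componentwise, hence still affordable and interior to the budget set) such that $\bar x \succ x^*$ still holds, and indeed $\bar x \succ z$ for all $z$ in a neighborhood of $x^*$. Applying Theorem \ref{thm1} to the pair $\bar x \succ x^*$, there is some $N$ with $\bar x \succ^{R_N} x^*$. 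By the robustness of strict revealed preference under small perturbations (the defining inequalities are strict), for $n$ large enough and $x_n$ close enough to $x^*$ we obtain $\bar x \succ^{R_n} x_n$.

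The final step is to turn this into a violation of $x_n \in x^{R_n}(p)$. The obstacle is that membership in $x^{R_n}(p)$ concerns bundles $x'$ that are \emph{affordable at $p$} with $x'\succ^{R_n}x_n$, whereas $\bar x$ need not lie exactly on the budget line. However, $\bar x$ is strictly below $f^{\succeq}(p)$ and hence strictly inside $\mathbf{B}(p)$, so $p\cdot\bar x<1$; since the relation $\succ^{R_n}$ only requires the dominating bundle to satisfy $x \ge x_{i_1}$ for some observed $x_{i_1}$, and since $\bar x$ can be taken interior, I can scale $\bar x$ up slightly to a bundle $x'$ with $p\cdot x'\le 1$ and $x'\ge \bar x$, which preserves $x'\succ^{R_n}x_n$ by monotonicity of the revealed-preference chain. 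This exhibits an affordable bundle strictly revealed preferred to $x_n$, contradicting $x_n\in x^{R_n}(p)$.

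The main difficulty I anticipate is the uniformity in the perturbation argument: I must ensure that a \emph{single} choice of $\bar x$ (or a uniformly bounded family) works simultaneously as $x_n\to x^*$, and that the finite index set witnessing $\bar x\succ^{R_N}x^*$ continues to witness $\bar x\succ^{R_n}x_n$ for the tail of the sequence. This requires that the strict inequalities $p_{i_j}\cdot x_{i_{j+1}}<p_{i_j}\cdot x_{i_j}$ in the definition of $\succ^{R_n}$ have a margin that survives replacing $x^*$ by nearby $x_n$, which follows because there are only finitely many such inequalities in the chain and each is strict. Handling the boundary case where $x^*$ has zero components (so monotonicity perturbations are delicate) may require using the interiority of $f^{\succeq}(p)$ and working within the interior of $\mathbf{X}$, consistent with the hypotheses already invoked in Theorem \ref{thm1}.
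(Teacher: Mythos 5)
Your overall strategy --- reduce to Theorem \ref{thm1} via strict convexity (so that $f^{\succeq}(p) \succ x$ for every other budget-feasible $x$) and then use compactness of the budget set to extract a single finite $n$ --- is exactly the paper's strategy; the paper runs it as a direct open-cover argument on a compact set $\mathbf{K} \subset \mathbf{B}(p)\setminus\{f^{\succeq}(p)\}$, whereas you run the equivalent sequential-compactness, proof-by-contradiction version. Two of your steps are superfluous: since $p \cdot f^{\succeq}(p) = 1$ by monotonicity, $f^{\succeq}(p)$ itself is an affordable dominating bundle, so neither the downward perturbation to $\bar x < f^{\succeq}(p)$ nor the final rescaling of $\bar x$ back toward the budget line is needed.

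The genuine gap is in your robustness step, and your own diagnosis of the difficulty points at the wrong inequalities. The chain inequalities $p_{i_j}\cdot x_{i_{j+1}} < p_{i_j}\cdot x_{i_j}$ involve only the observed data $(p_{i_j}, x_{i_j})$ and are completely unaffected by replacing $x^*$ with $x_n$; their strictness buys you nothing. The condition that \emph{is} affected is the terminal one, $x_{i_N} \ge x^*$, and it is a weak inequality with no margin: if $x_{i_N} \ge x^*$ holds with equality in some coordinate, then points $x_n \to x^*$ can violate $x_{i_N} \ge x_n$ along the entire sequence, so the chain witnessing $\bar x \succ^{R_N} x^*$ need not witness anything about the $x_n$. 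Perturbing the dominating bundle down to $\bar x$, as you do, does not help; the problem sits at the bottom of the chain, not the top. The standard repair --- which is also what is needed to justify the openness assertion that the paper itself states without proof (``there must be an open set around $x$\dots'') --- is to perturb the \emph{dominated} bundle upward: since $f^{\succeq}(p) \succ x^*$ and preferences are continuous, pick $\delta > 0$ with $f^{\succeq}(p) \succ y^*$ where $y^* = x^* + \delta\one$, and apply Theorem \ref{thm1} to the pair $(f^{\succeq}(p), y^*)$, obtaining a chain ending with $x_{i_N} \ge y^* \gg x^*$; then every $z \ll y^*$, and in particular every $x_n$ close enough to $x^*$, satisfies $x_{i_N} \ge z$, so the same chain gives $f^{\succeq}(p) \succ^{R_n} x_n$ for all sufficiently large $n$, yielding your contradiction. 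With that substitution your argument closes.
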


	\begin{proof}

	For any $p \in \Rea^L_{++} $, since preferences are strictly convex, $ f^{\succeq}(p) \succ x, $ for all $ x \in {\mathbf B}(p), x \ne f^{\succeq}(p) $.By Theorem ~\ref{thm1}, there exist some $n$ so that $ f^{\succeq}(p) \succ^{R_n} x $. Take any compact set
	\[ 
	{\mathbf K} \subset {\mathbf B}(p) \setminus \{ f^{\succeq}(p) \}. 
	\]
Whenever $ f^{\succeq}(p) \succ^{R_n} x, $ there must be an open set around $x$ so the $ f(p) $ is revealed preferred to any point in this set. Therefore there is a collection of open sets covering $ {\mathbf K}. $ Since $ {\mathbf K} $ is compact, there exists a finite subcover. For each of the finitely many points defining the subcover, there is a finite $n,$ such that $ f(p) $ is revealed preferred to that point. This completes the argument.
	
	\end{proof}

	\section{Equilibrium}

 	So far, our analysis has focused on the classical demand setting. More generally, economic theory derives relationships between the fundamentals of the economy, some of which may not be observable, and observed individual or aggregate behavior or equilibrium prices. It is then of interest to ask whether what is observed can be used to deduce individual preferences. We tackle this issue in the classical setting of Walrasian equilibrium and assume that individual behavior,  individual demand in particular, is not observable. 
	
	Observations consist only of equilibrium prices and individual endowments or equivalently, as we showed in Section \ref{sec:1}, of prices, individual incomes and the resulting aggregate demand.
	
	Throughout the section we assume that there are $H$ individuals with preferences $ \succeq^h $ for each $ h=1,\ldots, H$. Analogously to our analysis above,  we consider a sequence of arbitrary prices and individual incomes. We assume that $ \{ (p_k,(w_k^h)_{h \in {\mathbf H}} ), k=1, \ldots  \} $ become dense in $\Rea^{L}_{++} \times {\mathbf W} $ as $ k \rightarrow \infty, $ and we define observations on aggregate demand as
	\[
	D_k= d^{\succeq ^{\mathbf H}}(p_k,w_k^1,\ldots, w_k^H), \quad k=1,\ldots. 
	\]
	
	Given $n$ observations, it is useful to define the set of consumption allocations that are consistent with the $n$ observations in the sense that they add up to aggregate demand and satisfy the strong axiom: 
	\[
{\mathbf C}_n  = \left\{ x \in \Rea^{nHL}_+ : 
\begin{array}{l}
\sum_{h \in {\mathbf H}} x^h_k = D^{\succeq^{\mathbf H}}(p_k,w^{\mathbf H}_k),  \quad k=1,\ldots, n, \\ \\
p \cdot x^h_k = w_k,  \quad k=1,\ldots, n, \\ \\
(x^h_k,p_k)_{k=1}^n \ \ \text{ satisfy SARP}, \quad h \in {\mathbf H} 
\end{array}
\right\}.
	\] 

	We denote the projection of $ {\mathbf C}_n $ onto the coordinates corresponding to the $k'$th observation by $ {\mathbf C}_{nk} \subset \Rea^{HL}_+ ,$  $k=1,\ldots, n.$ That is, 
$ {\mathbf C}_{nk} $ is the set of all consumption vectors across the $H$ individuals that are on the budget hyperplane and add up to aggregate consumption, and for which there are consumptions for all other observations $ k' \ne k $ that also add up to the relevant aggregate consumptions, are on relevant budget plains and, crucially, altogether satisfy the strong axiom of revealed preference.

	\begin{figure}[tb]
		
	\begin{center}
		\includegraphics[scale=0.25]{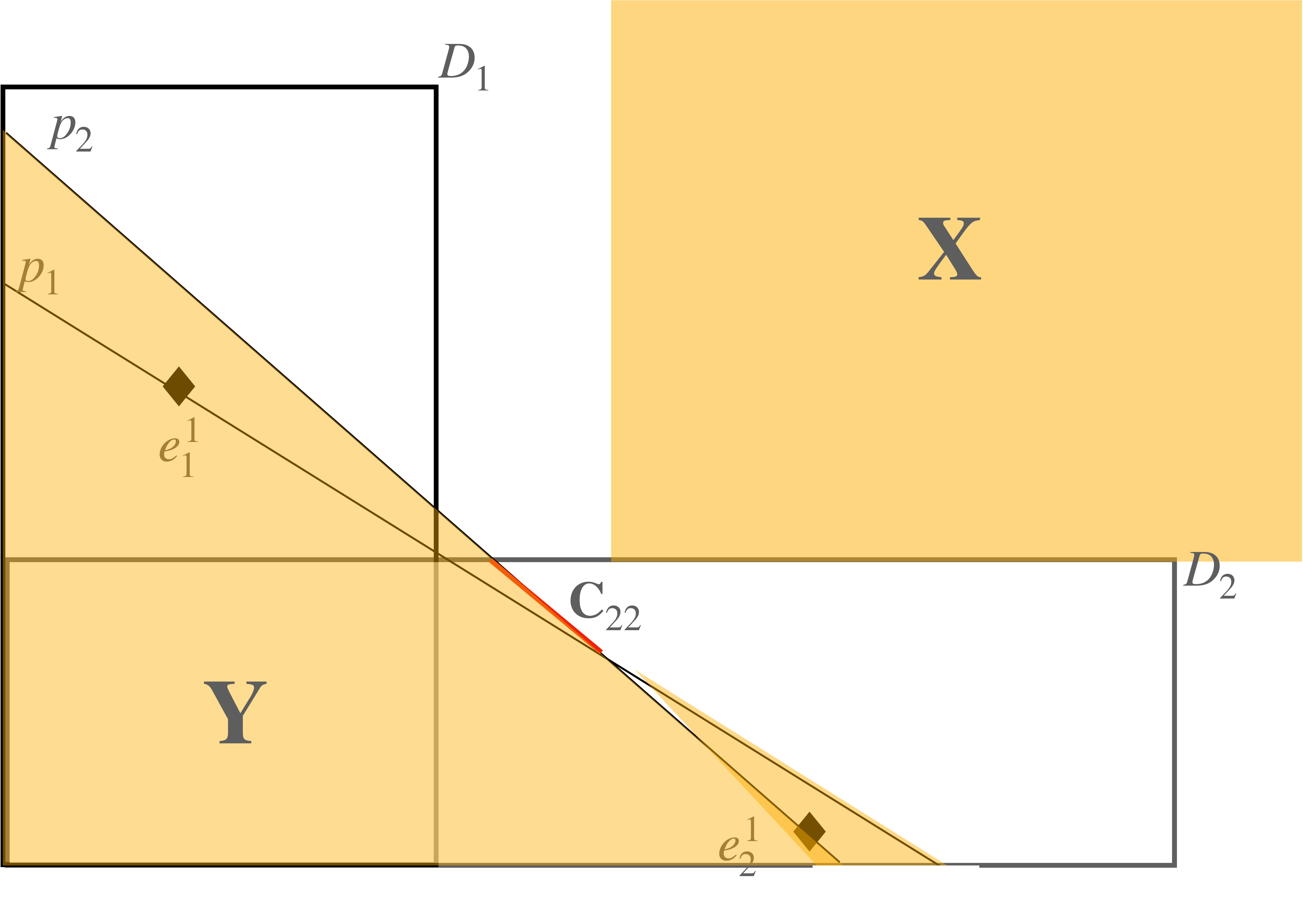}\\
	\end{center}
	\vspace{-0.5cm}
	\footnotesize{}.
	\caption{Revealed preference from pairwise choice}
	\label{fig:3}
	
	\end{figure}

	We illustrate how observations on aggregate demand, individual endowments and equilibrium prices limit the set of possible consumption vectors. Figure \ref{fig:3} shows two Edgeworth-boxes, $ D_1 $ corresponding to prices $p_1$ and endowments $ (e^1_1, D_1-e^1_1)$ and $ D_2 $ corresponding to prices $ p_2 $ and endowments $ (e^1_2, D_2-e^1_2) .$ The red line-segment on the budget line $p_2 $ corresponds to the possible consumption of individual 1 in the second observation. Since individual 1 must consume insight the first Edgeworth box given prices $ p_1, $ the only consumption consistent with the weak axiom lies on the red line-segment. Similarly to above we can then conclude that any bundle in the set $ {\mathbf X} $ is strictly preferred to any bundle in $ {\mathbf Y} $.

	Importantly, in our context, Assumption \ref{ass1} implies that, for sufficiently large $n,$ the sets $ {\mathbf C}_{nk} $ become {\it small} in the sense that they contain only a neighborhood around the profile of actual individual demands. Formally,

	\begin{lemma}
	\label{lemma1}
	
	Suppose preferences are continuous, monotone, strictly convex and Lipschitzian and suppose Assumption \ref{ass1} holds. Given any $\epsilon > 0$, any individual $h$ and any observation $k,$ there exists an $n$ such that 
	\[
	x^{\mathbf H} \in {\mathbf C}_{nk} 
	\Rightarrow  \| x^h- f^{\succeq^h}(\frac{p_k}{w^h_k}) \|  < \epsilon . 
	\]
	
	\end{lemma}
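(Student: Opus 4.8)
The plan is to argue by contradiction, extracting from a hypothetical sequence of ``bad'' allocations a limiting allocation that is rationalizable and yet reproduces the observed aggregate demand at a dense set of prices and incomes, which Assumption~\ref{ass1} forbids. Suppose the claim fails. Then there are $\epsilon>0$, an individual $h^*$ and an observation $k^*$ such that for every $n$ we may pick a full profile $x(n)=(x^h_j(n))_{j\le n,\,h\in\mathbf{H}}\in\mathbf{C}_n$ whose $k^*$-th coordinate lies in $\mathbf{C}_{nk^*}$ and satisfies $\|x^{h^*}_{k^*}(n)-f^{\succeq^{h^*}}(p_{k^*}/w^{h^*}_{k^*})\|\ge\epsilon$. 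For fixed $j$ and $h$ the consumption $x^h_j(n)$ is confined to the compact set $\{x\in\Rea^L_+:p_j\cdot x=w^h_j\}$, so a diagonal extraction yields a subsequence along which $x^h_j(n)\to\bar x^h_j$ for every $j$ and every $h$ at once. The limit inherits the budget identities $p_j\cdot\bar x^h_j=w^h_j$, the adding-up identities $\sum_{h}\bar x^h_j=D_j$, and the separation $\|\bar x^{h^*}_{k^*}-f^{\succeq^{h^*}}(p_{k^*}/w^{h^*}_{k^*})\|\ge\epsilon$.

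Next I would convert the limiting data into an alternative preference profile. For each $h$ the infinite collection $(\bar x^h_j,p_j)_{j\ge1}$ is a coordinatewise limit of data satisfying SARP, hence satisfies the revealed-preference consistency that, following \cite{reny15}, admits a continuous, monotone rationalization $\succeq^{\prime h}$ with $\bar x^h_j=f^{\succeq^{\prime h}}(p_j/w^h_j)$ for all $j$; one would like to take $\succeq^{\prime h}$ strictly convex, so that its demand is single valued. Because the normalized prices $p_j/w^h_j$ are dense in $\Rea^L_{++}$ (the $p_j$ are already dense and the $w^h_j$ range over the bounded interval $[\underline{w}_h,\overline{w}_h]$), the profile $\succeq^{\prime\mathbf{H}}$ reproduces the observed aggregate demand at every observation: $d^{\succeq^{\prime\mathbf{H}}}(p_j,w^{\mathbf{H}}_j)=\sum_h\bar x^h_j=D_j=d^{\succeq^{\mathbf{H}}}(p_j,w^{\mathbf{H}}_j)$ for all $j$.

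Finally I would invoke Assumption~\ref{ass1}. Since $\bar x^{h^*}_{k^*}\ne f^{\succeq^{h^*}}(p_{k^*}/w^{h^*}_{k^*})$, the demand functions, and hence by the identification result of \cite{mascolell77-2} the profiles $\succeq^{\prime\mathbf{H}}$ and $\succeq^{\mathbf{H}}$, differ. Assumption~\ref{ass1} together with the continuity of aggregate demand then yields an open set of $(p,w^{\mathbf{H}})\in\Rea^L_{++}\times\mathbf{W}$ on which $d^{\succeq^{\prime\mathbf{H}}}$ and $d^{\succeq^{\mathbf{H}}}$ disagree; by density some observation $(p_j,w^{\mathbf{H}}_j)$ falls in this set, contradicting the equality just established. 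Thus no bad sequence exists and the claim follows.

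I expect the crux to be the middle step: preservation of rationalizability under the coordinatewise limit and, more delicately, the passage to a strictly convex (single-valued-demand) rationalization, since Assumption~\ref{ass1} is phrased for aggregate demand as a function. The strict terminal inequality in SARP may degrade to an equality in the limit, so one must verify either that the limiting data still admits a strictly convex rationalization or that Assumption~\ref{ass1} can be applied to a selection from the demand correspondence; the remaining compactness and density bookkeeping is routine.
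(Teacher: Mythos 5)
Your proposal follows, in substance, the same strategy as the paper's own proof: assume the conclusion fails, manufacture an infinite data set consistent with every finite collection of observations, rationalize it by an alternative preference profile $\succeq^{\prime\mathbf H}$, and then combine Assumption \ref{ass1} with the density of the observations $(p_k,w^{\mathbf H}_k)$ to locate an observation $j$ at which $d^{\succeq^{\prime\mathbf H}}(p_j,w^{\mathbf H}_j)\ne d^{\succeq^{\mathbf H}}(p_j,w^{\mathbf H}_j)=D_j$, a contradiction. The genuine difference is in how uniformity over the set of allocations at distance at least $\epsilon$ from true demand is obtained. The paper argues pointwise: a single fixed bad allocation $y^{\mathbf H}_k$ that remained in ${\mathbf C}_{nk}$ for all $n$ would yield an infinite SARP data set, rationalizable by \cite{mascolell78}, contradicting Assumption \ref{ass1}; then, since strict revealed preference is defined by strict (hence open) inequalities, every excluded allocation is excluded together with an open neighborhood, and a finite subcover of the compact set $\{x\in{\mathbf B}(\frac{p_k}{w^h_k}):\|x-f^{\succeq^h}(\frac{p_k}{w^h_k})\|\ge\epsilon\}$ produces one finite $n^*$ that works for all of them. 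You instead take a sequence of bad profiles, one per $n$, and get uniformity in a single stroke by diagonal extraction; this replaces the open-cover argument by sequential compactness and is a legitimate, arguably cleaner, alternative.

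The crux you flag is real, and it is precisely the step the paper glosses over. Limits of data satisfying SARP need not satisfy SARP, or even GARP: along your subsequence a strict terminal inequality can degrade to an equality, and, conversely, a violated weak inequality can close up into a weak revealed preference, so the limit profile $(\bar x^h_j,p_j)_{j\ge 1}$ can exhibit a cycle (e.g.\ $p_1\cdot\bar x_2=p_1\cdot\bar x_1$ together with $p_2\cdot\bar x_1<p_2\cdot\bar x_2$) that no monotone preference rationalizes, in which case your appeal to \cite{reny15} does not get off the ground. Moreover, to invoke Assumption \ref{ass1} you need $\succeq^{\prime\mathbf H}$ to lie in the class the assumption quantifies over --- in particular to have single-valued (strictly convex, Lipschitzian) demand passing through the $\bar x^h_j$ --- which neither \cite{reny15} nor \cite{mascolell78} delivers for free. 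You should be aware, however, that the paper's proof contains the identical gap: it simply asserts that the varying completions witnessing $y^{\mathbf H}_k\in{\mathbf C}_{nk}$ for every $n$ can be assembled into one infinite sequence satisfying SARP, and that \cite{mascolell78} then yields a generating preference $\succeq^{h\prime}$ of the required kind. So your write-up is no less complete than the published argument; what you add is an explicit construction of the limit and an honest identification of the point that would need to be repaired to make either proof fully rigorous.
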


	\begin{proof}

	Suppose that, for some $k$ and some $h,$
$ y^h_k \ne f^{\succeq^h}(\frac{p_k}{w^h_k}) $ and that $ y_k^{\mathbf H} \in {\mathbf C}_{nk}  $ for all $n$. Then there is an infinite sequence of prices and choices $ (y^h_k,p_k) $, $ k=1,\ldots $ that satisfy SARP. \cite{mascolell78} implies that
 each $ y^h_k = f^{\succeq^h \prime}(\frac{p_k}{w^h_k}) $ for some $ \succeq^{h\prime }  \ne \succeq^h $. But, by Assumption 1 this implies that there must be some $j,$
 for which
 	\[
	d^{\succeq^{\mathbf H}}(p_j, w_j^{\mathbf H})
     \neq d^{\succeq^{\mathbf H}\prime}(p_j,  w_j^{\mathbf H}) .
     	\]
	This contradicts that $ y^{\mathbf H}_k \in {\mathbf C}_{nk} $ for all $n,$ and, as a consequence,  there must exist a $ \overline{n} $ such that $ y^{\mathbf H_k} \notin {\mathbf C}_{\overline n,k}  $.  

	For all $ n \le \overline{n}, $ if some $ (y^{\mathbf H}_k)_{k=1\ldots n-1} \in {\mathbf C}_{n-1} $  but there is no $ y^{\mathbf H}_{n} $ so that 
	\[
	(y^{\mathbf H}_k)_{k=1\ldots n} \in {\mathbf C}_{n}, 
	\] 
then for all $ y^{\mathbf H}_{n} $ with
	\[ 
	\sum_{h \in {\mathbf H}} y^h_n = D_{\overline n}
	\]
for some $h,$  there must be a budget feasible $ \tilde{y}^h $ that is strictly revealed preferred to $y^h$. Since this is a strict order, this must be true for an open neighborhood of $ y^{\mathbf H}_k $. Since the intersection of finitely many open neighborhood forms an open neighborhood there is an open neighborhood of $ y^{\mathbf H}_k $ which is not contained in $ {\mathbf C}_{\overline n,k}. $

The same argument (with possibly different $ \overline{n} $) applies to  all $ y^{\mathbf H}_k \in  \{ x \in {\mathbf B}(\frac{p_k}{w_k}): \| x- f^{\succeq^h}(\frac{p_k}{w_k}) \|  \ge  \epsilon\}. $

Since $ \{ x \in {\mathbf B}(\frac{p_k}{w_k}): \| x- f^{\succeq^h}(\frac{p_k}{w_k}) \|  \ge  \epsilon\} $ is a compact set, the open cover generated by all these $ y^{\mathbf H} $ has a finite sub-cover and hence there must be some finite $ n^* $ that gives the result.

	\end{proof}

	The result is surprising: Individual demands can be identified from the aggregate demand function! However, note that the result is consistent with the results in \cite{chiapporietal04} and in particular in \cite{matzkin06}. Observation of aggregate demand means that one observes the income effects of each individual's demand. If one assumes a rank condition (\cite{lewbel90}), as in \cite{chiapporietal04}, these income effects identify individual demands. The contribution of our lemma is to show that with finite data this is approximately true.

	\subsection{Identification of individual choice}
	
	As in Section \ref{sec:2} we want to develop a notion of `equilibrium revealed preferred' meaning that, through observations of equilibrium prices, it is revealed that an individual must prefer some bundle $x$ to another bundle $y$.
Given $ x,y \in {\mathbf X},$ we say that
$x$ is equilibrium revealed preferred to $y$ by individual $h$ given 
$n$ observations,
$ x \succ^{R^h_n} y $
if for all $ (x^{\mathbf H}_k)_{k=1,\ldots, n} \in {\mathbf C}_n $ and
  if there are $N$ observations indexed by $ i_1, \ldots, i_N \in \{1,\ldots,n \}$,
so that $x^h \ge x^h_{i_1}$,
and $ p_{i_j} \cdot x^h_{i_{j+1}} < p_{i_j} \cdot  x^h_{i_j} $ for all $ j=1,\ldots N$, as well as $ x^h_{i_N}\ge  y $.
While for the classical notion of revealed preferred, 
$x$ is said to be revealed preferred to $y$ if 
choices reveal that an individual must prefer $x$ to $y$, in this setting equilibrium prices reveal that an individual prefers $x$ to $y$. This is possible because incomes vary and Lemma \ref{lemma1} shows that individual choices can be recovered (approximately) from aggregate demand.

Obviously, if there is a $n$  such that $ x \succ^{R^h_n} y $ then $ x \succ^h y $. As in Section \ref{sec:2}, we show that the converse also holds, for sufficiently large $n$ $ x \succ^{R^h_n} y $ must hold whenever $ x \succ^h y $.

	\begin{theorem} \label{thm3}
Suppose individual preferences are continuous, monotone, strictly convex and Lipschitzian. Suppose they satisfy Assumption \ref{ass1}.
Given any $h \in {\mathbf H} $, any $ x,y \in {\mathbf X} $ with $ x \succ^h y$
there exists an $n$ such that
$$ x \succ^{R^h_n} y $$
	\end{theorem}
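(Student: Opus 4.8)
The plan is to reduce Theorem~\ref{thm3} to Theorem~\ref{thm1}, using Lemma~\ref{lemma1} to bridge the gap between true individual demands and the consumptions that appear in consistent allocations. The definition of $\succ^{R^h_n}$ quantifies over \emph{all} allocations in $\mathbf{C}_n$, so the essential difficulty relative to Theorem~\ref{thm1} is that the individual bundles $x^h_{i_j}$ entering the revealed-preference chain are not the true demands $f^{\succeq^h}(p_{i_j}/w^h_{i_j})$ but merely coordinates of points in $\mathbf{C}_{nk}$. The whole argument therefore rests on showing that a \emph{strict} revealed-preference chain constructed for the true demands survives the replacement of those demands by nearby consistent consumptions.

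First I would reduce to the single individual $h$. Since $(p_k,w^{\mathbf{H}}_k)$ becomes dense in $\Rea^L_{++}\times\mathbf{W}$ and $\mathbf{W}$ is a compact box with strictly positive bounds, the normalized prices $p_k/w^h_k$ become dense in $\Rea^L_{++}$: given a target $q\in\Rea^L_{++}$, the point $(q\,w^h,w^{\mathbf{H}})$ lies in $\Rea^L_{++}\times\mathbf{W}$ and maps to $q$, and density transfers by continuity of $(p,w)\mapsto p/w^h$. Hence the data $(p_k/w^h_k,\,f^{\succeq^h}(p_k/w^h_k))$ are exactly the kind of dense individual-demand observations to which Theorem~\ref{thm1} applies, and normalizing by $w^h_{i_j}>0$ leaves the strict inequalities $p_{i_j}\cdot x^h_{i_{j+1}}<p_{i_j}\cdot x^h_{i_j}$ of the $\succ^{R^h_n}$ definition unchanged.

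Next I would manufacture slack at the two endpoints of the chain by invoking continuity and monotonicity of $\succeq^h$: since $x\succ^h y$, there exist $\hat{x}\ll x$ and $\hat{y}\gg y$ with $\hat{x}\succ^h\hat{y}$. Applying Theorem~\ref{thm1} to $\succeq^h$ with the pair $(\hat{x},\hat{y})$ yields a finite index set $i_1,\ldots,i_N$ and true demands $x^{h,*}_{i_j}=f^{\succeq^h}(p_{i_j}/w^h_{i_j})$ satisfying $\hat{x}\ge x^{h,*}_{i_1}$, the strict chain $p_{i_j}\cdot x^{h,*}_{i_{j+1}}<p_{i_j}\cdot x^{h,*}_{i_j}$, and $x^{h,*}_{i_N}\ge\hat{y}$. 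Because $x\gg\hat{x}\ge x^{h,*}_{i_1}$ and $x^{h,*}_{i_N}\ge\hat{y}\gg y$, the two weak endpoint inequalities now hold with strictly positive margin, and the finitely many chain inequalities are strict; a routine estimate then produces an $\eta>0$ such that perturbing each $x^{h,*}_{i_j}$ to any $\tilde{x}^h_{i_j}$ with $\|\tilde{x}^h_{i_j}-x^{h,*}_{i_j}\|<\eta$ preserves $x\ge\tilde{x}^h_{i_1}$, all strict chain inequalities, and $\tilde{x}^h_{i_N}\ge y$.

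Finally I would invoke Lemma~\ref{lemma1} with this $\eta$ at each of the finitely many observations $i_1,\ldots,i_N$ and take the largest resulting index $n$. For that $n$, every allocation $(x^{\mathbf{H}}_k)\in\mathbf{C}_n$ has $\|x^h_{i_j}-x^{h,*}_{i_j}\|<\eta$ for all $j$, so the perturbed chain holds with $\tilde{x}^h_{i_j}=x^h_{i_j}$; since this is true for every allocation in $\mathbf{C}_n$, we obtain $x\succ^{R^h_n}y$. The main obstacle is precisely this robustness step: it is to render the endpoint inequalities strict that one must pass from $(x,y)$ to the interior pair $(\hat{x},\hat{y})$ before applying Theorem~\ref{thm1}, and one must verify that the uniform approximation Lemma~\ref{lemma1} provides over all of $\mathbf{C}_n$ simultaneously controls every consumption appearing in the chain.
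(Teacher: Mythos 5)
Your proposal is correct and follows essentially the same route as the paper: apply Theorem~\ref{thm1} to individual $h$'s true demands (which constitute one allocation in ${\mathbf C}_n$), then invoke Lemma~\ref{lemma1} so that every other allocation in ${\mathbf C}_n$ is forced close to those demands and hence inherits the revealed-preference chain. The only difference is one of completeness: the paper's two-line proof leaves implicit the density of the normalized prices $p_k/w^h_k$ and the robustness step (passing to $\hat{x} \ll x$, $\hat{y} \gg y$ to create slack in the weak endpoint inequalities so the chain survives the $\eta$-perturbation), both of which you spell out explicitly.
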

\begin{proof}
Theorem \ref{thm1} implies that there is an $\bar n$ such that for all $ n > \bar n$ there is a $ (x^{\mathbf H}_k)_{k=1,\ldots,n} \in {\mathbf C}_n $
so that the associated $ (x^h_k) $ to together with $ (p_k) $ imply that $x$ is revealed preferred to $y$ by individual $h$. 

Lemma \ref{lemma1} implies that for sufficiently large $n$ there can be no  other solutions $ (\tilde{x}^{\mathbf H}_k)_{k=1,\ldots,n} \in {\mathbf C}_n $ for which this does not hold.

\end{proof}

This is the main result of our paper. Sufficiently many observations on the equilibrium manifold allow us to infer how any one of the individuals in the economy will choose between two bundles.

It follows directly from the proof of Theorem of Theorem \ref{thm2} above that sufficiently many observations on the equilibrium manifold also allow us to predict each individual's Walrasian demand at arbitrary prices. Formally we can define the equilibrium revealed demand correspondence as
 $$ x^{R^h_n}(p) = \{ x \in {\mathbf X}: p \cdot x = 1, \mbox{ there is no } x' \in {\mathbf X}, p \cdot x' \le 1, x' \succ^{R^h_n} x \}. $$  
 
 Note that the definition is identical to the definition in Section \ref{sec:2} except that we replace ``revealed preferred'' by ``equilibrium revealed preferred.''
 As above, 
it is clear that whenever preferences are convex and monotone,
$$ f^{\succeq}(p) \in x^{R_n}(p), \quad \text{for} \ n=1, \ldots. $$
 
The following theorem shows that for sufficiently large $n$ demand can be arbitrarily well approximated by the revealed demand correspondence.

\begin{theorem}
\label{thm4}
Suppose preferences are continuous, strictly convex, monotone and Lipschitzian. Suppose also that Assumption \ref{ass1} holds.
Given any $ p \in \Rea^L_{++} $ and any $ \epsilon >, 0 $ there exists an $n$ such that
$$ 
 x^{R^h_n}(p) \subset
\{ x \in {\mathbf B}(p): \| f^{\succeq} (p) -x \| \le \epsilon \}.  $$
\end{theorem}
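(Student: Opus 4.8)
The plan is to transcribe the compactness argument from the proof of Theorem~\ref{thm2}, replacing ordinary revealed preference $\succ^{R_n}$ by the equilibrium notion $\succ^{R^h_n}$ and invoking Theorem~\ref{thm3} in place of Theorem~\ref{thm1}. Fix $p \in \Rea^L_{++}$, $\epsilon > 0$ and the individual $h$, and set
\[
\mathbf{K} = \{ x \in \Rea^L_+ : p \cdot x = 1, \ \| f^{\succeq^h}(p) - x \| \ge \epsilon \},
\]
which is compact; since every element of $x^{R^h_n}(p)$ lies on the budget hyperplane, it suffices to control $\mathbf{K}$. The goal is to produce a single $n^*$ for which $f^{\succeq^h}(p)$ is equilibrium revealed preferred to every point of $\mathbf{K}$. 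As $f^{\succeq^h}(p)$ is budget feasible at $p$, the definition of $x^{R^h_{n^*}}(p)$ then forces $\mathbf{K} \cap x^{R^h_{n^*}}(p) = \emptyset$, which is exactly the desired inclusion.

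First I would treat each point of $\mathbf{K}$ separately. For $x \in \mathbf{K}$ we have $x \ne f^{\succeq^h}(p)$, so strict convexity gives $f^{\succeq^h}(p) \succ^h x$, and continuity of $\succeq^h$ lets me choose a strictly larger bundle $x^+ \gg x$ for which $f^{\succeq^h}(p) \succ^h x^+$ still holds. Theorem~\ref{thm3} then supplies an index $n_x$ with $f^{\succeq^h}(p) \succ^{R^h_{n_x}} x^+$. The useful feature of having pushed the target up to $x^+$ is that, for \emph{every} allocation $(x^{\mathbf{H}}_k) \in \mathbf{C}_{n_x}$, the witnessing chain terminates in a bundle with $x^h_{i_N} \ge x^+ \gg x$. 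Hence on the open set $U_x = \{ y : y \ll x^+ \}$, which contains $x$, the terminal inequality $x^h_{i_N} \ge y$ holds simultaneously for all of these allocations, and therefore $f^{\succeq^h}(p) \succ^{R^h_{n_x}} y$ for every $y \in U_x$.

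I would also record that $\succ^{R^h_n}$ is monotone in $n$: the restriction of any $(x^{\mathbf{H}}_k)_{k=1}^{n'} \in \mathbf{C}_{n'}$ to its first $n \le n'$ coordinates again lies in $\mathbf{C}_n$ (it sums to the relevant aggregate demands, is budget feasible, and inherits SARP from a subset of observations), so a chain available at stage $n$ remains available at every later stage, and the relation can only grow. The family $\{ U_x \}_{x \in \mathbf{K}}$ is then an open cover of $\mathbf{K}$; compactness extracts a finite subcover $U_{x_1}, \ldots, U_{x_m}$, and $n^* = \max_j n_{x_j}$ works for all of them by monotonicity. Every $y \in \mathbf{K}$ lies in some $U_{x_j}$, so $f^{\succeq^h}(p) \succ^{R^h_{n^*}} y$, and the argument closes.

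I expect the openness step to be the only genuine obstacle, precisely because $\succ^{R^h_n}$ carries a universal quantifier over the entire consistent set $\mathbf{C}_n$: a naive neighborhood argument would have to control the witnessing revealed-preference chain uniformly over a continuum of candidate allocations. The device of strictly enlarging the target to $x^+ \gg x$ is what delivers this uniformity, since the common terminal condition $x^h_{i_N} \ge x^+$ is forced for every allocation by Theorem~\ref{thm3} and is stable under shrinking $y$ below $x^+$. Everything else is a direct copy of the compactness and budget-feasibility argument in the proof of Theorem~\ref{thm2}.
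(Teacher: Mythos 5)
Your proof is correct and takes exactly the route the paper intends: the paper's own ``proof'' of Theorem \ref{thm4} is a one-line remark that the result follows from Theorem \ref{thm3} combined with the compactness argument in the proof of Theorem \ref{thm2}, which is precisely what you transcribe. Your write-up is in fact more careful than the paper's, since the enlargement of the target to $x^+ \gg x$ (which yields openness uniformly over the universally quantified set $\mathbf{C}_n$) and the monotonicity of $\succ^{R^h_n}$ in $n$ are steps the paper asserts or leaves implicit.
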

Given the result in Theorem \ref{thm4} and the proof of Theoerm \ref{thm3} the result follows immediately.

\subsection{Identification of Equilibrium}  

A more delicate issue is whether one can forecast equilibrium prices. Clearly a resolution of the transfer paradox would require this. An obvious problem that arises is that the possibility of multiple equilibria cannot easily be ruled out. But perhaps, even in the presence of multiplicity, one can predict one of the equilibria.
Unfortunately things become much more complicated because of the intricate relation between approximate equilibria and exact equilibrium. While it is true that
for any economy and any $ \delta> 0 $ there is  $ \epsilon  $
such that, if the norm of aggregate excess demand is smaller than $ \epsilon, $ prices are within $ \delta $ of equilibrium prices, \citet*{anderson1986almost}, there is no constructive algorithm to determine $ \epsilon $. In fact, \cite{richter1999non} seems to suggest that it is generally impossible to do so.

We will therefore focus on approximate equilibria.
Given $n$ observations, we define the  revealed approximate 
equilibrium correspondence as follows:
	\[
	{\mathbf P}^{R_n(\succeq^{\mathbf H})}(e^{\mathbf H}) = \left\{
	p \in \Delta^L :
	\begin{array}{l}
	\exists \ x^{\mathbf H}\in \Rea^{HL}_+,
	\begin{array}{l}
	p \cdot x^h=p\cdot e^h \mbox{ for all } h \in {\mathbf H}, \\ \\
	\sum_{h\in{\mathbf H}} (x^h-e^h)=0, 
	\end{array}
	\\ \\
	\not\exists \ y^{\mathbf H}\in \Rea^{HL}_+,
	\begin{array}{l}
	p \cdot y^h=p\cdot e^h \mbox{ for all } h \in {\mathbf H}, \\ \\ 
	y \succ^{R_n} x^h \mbox{ for all } h \in {\mathbf H} 
	\end{array}
	\end{array}
	\right\}.
	\]

	Note that this is a semi-algebraic set that can be written as the finite union and intersection of sets of the form $ \{x\in \Rea^{n}:g(x)> 0 \}$ or $\{ x \in \Rea^{n}: f(x)=0\}, $ where $f$ and $g$ are polynomials in $x$ with coefficients in $\Rea$. \citet*{basu2006algorithms}, quantifier-elimination can be used to compute the sets .

It is relatively way to show that, for any $ \epsilon > 0, $ these sets describe the set of all $ \epsilon $-equilibria.
Our final result is as follows:
\begin{theorem}
\label{thm5}
Suppose preferences are continuous, monotone, strictly convex and Lipschitzian, and Assumption \ref{ass1} holds.
For an $ \epsilon > 0, $ there exists an $n$ such that
$$ {\mathbf P}^{R_n(\succeq^{\mathbf H})}(e^{\mathbf H})  \subset 
 \{ p \in \Delta: \| D^{\succeq^h}(p,p\cdot e^1,\ldots, p \cdot e^H) -e\| < \epsilon \}. $$

\end{theorem}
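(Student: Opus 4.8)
The plan is to reduce the aggregate statement to the individual approximation result of Theorem~\ref{thm4} and then sum the individual errors across the $H$ consumers, using resource feasibility to rewrite aggregate excess demand as the sum of each consumer's deviation from true Walrasian demand.

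First I would unpack what membership $p \in {\mathbf P}^{R_n(\succeq^{\mathbf H})}(e^{\mathbf H})$ supplies: a profile $x^{\mathbf H}$ that is budget balanced, $p\cdot x^h = p\cdot e^h$, resource feasible, $\sum_{h\in{\mathbf H}} x^h = \sum_{h\in{\mathbf H}} e^h$, and for which no consumer can be handed a budget-feasible bundle that is equilibrium revealed preferred to their own. The operative content of this last (non-domination) clause is that each consumer's assigned bundle is revealed optimal on their own budget; in the normalization of Section~\ref{sec:2}, with individual $h$ facing the price $p/(p\cdot e^h)$ at unit income, this says precisely that $x^h \in x^{R^h_n}(p/(p\cdot e^h))$ for every $h$. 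This is the step where the exact reading of the quantifier matters, and it is the natural counterpart of the individual-SARP requirement built into ${\mathbf C}_n$.

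Next I would invoke Theorem~\ref{thm4} consumer by consumer. Fix $\epsilon>0$ and set $\epsilon' = \epsilon/(2H)$. Theorem~\ref{thm4} guarantees, after enough observations, that every element of the equilibrium revealed demand set lies within $\epsilon'$ of true demand; applied to consumer $h$ at the price $p/(p\cdot e^h)$ it gives $\| x^h - f^{\succeq^h}(p/(p\cdot e^h)) \| \le \epsilon'$. Summing over $h$ and using resource feasibility to write
\[
d^{\succeq^{\mathbf H}}(p, p\cdot e^1,\ldots, p\cdot e^H) - \sum_{h\in{\mathbf H}} e^h
= \sum_{h\in{\mathbf H}} \left( f^{\succeq^h}(p/(p\cdot e^h)) - x^h \right),
\]
the triangle inequality bounds the norm of aggregate excess demand by $H\epsilon' = \epsilon/2 < \epsilon$, which is the desired conclusion.

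The main obstacle is uniformity in $p$. Theorem~\ref{thm4} delivers, a priori, a threshold $n$ that depends on the price at which demand is predicted, whereas here $p$ ranges over the whole correspondence ${\mathbf P}^{R_n(\succeq^{\mathbf H})}(e^{\mathbf H}) \subset \Delta$ and the claimed $n$ must work for all such $p$ at once. I would resolve this by compactness, exactly as in the proof of Theorem~\ref{thm4}. Since the endowments are fixed and interior, as $p$ ranges over a compact set of prices bounded away from the boundary of the simplex the normalized prices $p/(p\cdot e^h)$ range over a compact subset of $\Rea^L_{++}$, on which each $f^{\succeq^h}$ is continuous and the budget sets are uniformly bounded; redoing the finite-subcover argument of Theorem~\ref{thm4} on the joint compact set of price-bundle pairs and then maximizing the finitely many thresholds over $h\in{\mathbf H}$ yields a single $n$ valid for every consumer and every admissible $p$. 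The one point needing care is the boundary of $\Delta$, where demand may be unbounded: there I would observe that such prices carry large excess demand and so are excluded from both the target $\epsilon$-equilibrium set and, for large $n$, from ${\mathbf P}^{R_n(\succeq^{\mathbf H})}(e^{\mathbf H})$ itself, so it suffices to run the compactness argument on an interior set of prices.
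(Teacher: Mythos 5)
Your proof takes essentially the same route as the paper's: the paper likewise reduces the claim to the individual revealed-demand approximation (it defines the set of revealed-consistent aggregate demands at $p$ and invokes the proof of Theorem~\ref{thm2}), which is exactly your per-consumer application of Theorem~\ref{thm4} followed by summation under resource feasibility. If anything, your write-up is more careful than the paper's two-line argument, since you also flag the quantifier reading in the definition of ${\mathbf P}^{R_n(\succeq^{\mathbf H})}$, the uniformity of the threshold $n$ over $p$, and the behavior near the boundary of the simplex, all of which the paper passes over in silence.
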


\begin{proof}

	Define 
	\[
	{\mathbf D}^{R_n(\succeq^{\mathbf H})} =\left\{
D \in \Rea^L_+ : 
\begin{array}{l}
\exists \ x^{\mathbf H}:
\begin{array}{l}
\sum_{h\in{\mathbf H}} x^h=D, \\ \\
p \cdot x^h=p\cdot e^h \mbox{ for all } h \in {\mathbf H}, 
\end{array}
\\ \\
\not\exists \ y^{\mathbf H}: 
\begin{array}{l}
p \cdot y \le p \cdot e^h, \\ \\ 
y \succ^{R_n} x^h \mbox{ for all } h \in {\mathbf H} 
\end{array}
\end{array}
\right\}.
\]

It suffices to show that that, for each $ \epsilon, $ 
there is an $n$ such that 
$$ \sup_{d \in {\mathbf D}^{R_n}} \| d-D^{\succeq^h}(p) \| < \epsilon $$
The proof of Theorem 2 implies this result.

\end{proof}

As explained above, it seems difficult to go beyond this result and make statements about exact equilibria. This is due simply to the fact that small perturbations of fundamentals can have large effects on equilibrium prices. When it is known a priori that equilibrium is unique sufficiently many observations will allow us to identify the equilibrium prices within $ \epsilon $ since the set of approximate equilibrium prices must shrink eventually to consist only a neighborhood of the exact equilibrium.

	\bibliography{ifd-bibliography}
	
\end{document}